\newtheorem{theorem}{Theorem}
\title{\LARGE \bf
Towards Event-Triggered NMPC for Efficient 6G Communications:\\Experimental Results and Open Problems
}
\author{Jens Püttschneider, Julian Golembiewski, Niklas A. Wagner, Christian Wietfeld, Timm Faulwasser
\thanks{This work has been partly funded by the Federal Ministry of Education and Research (BMBF) via the project \emph{6GEM} under funding reference 16KISK038, the Deutsche Forschungsgemeinschaft (DFG, German Research Foundation) under project number 508759126, and by the Ministry for Economic Affairs, Industry, Climate Action and Energy (MWIKE) of the State of North Rhine-Westphalia via the project \emph{5hine} under funding reference 005-2108-0074.}
\thanks{\textbf{Jens Püttschneider}, \textbf{Julian Golembiewski}, Institute of Energy Systems, Energy Efficiency and Energy Economics, TU Dortmund University, Dortmund, Germany
        {\tt\small \{jens.puettschneider, julian.golembiewski\}@tu-dortmund.de}}%
\thanks{\textbf{Niklas A. Wagner}, \textbf{Christian Wietfeld}, Communication Networks Institute (CNI), TU Dortmund University, Dortmund, Germany
        {\tt\small \{niklas.wagner, christian.wietfeld\}@tu-dortmund.de}}%
\thanks{\textbf{Timm Faulwasser}, Institute of Control Systems, Hamburg University of Technology, Hamburg, Germany
        {\tt\small timm.faulwasser@ieee.org}}%
}
\begin{document}

\maketitle
\thispagestyle{empty}
\pagestyle{empty}

\begin{abstract}
    Networked control systems enable real-time control and coordination of distributed systems, leveraging the low latency, high reliability, and massive connectivity offered by 5G and future 6G networks. 
    Applications include autonomous vehicles, robotics, industrial automation, and smart grids.
	Despite networked control algorithms admitting nominal stability guarantees even in the presence of delays and packet dropouts, their practical performance still heavily depends on the specific characteristics and conditions of the underlying network.
	To achieve the desired performance while efficiently using communication resources, co-design of control \textit{and} communication is pivotal.
	Although periodic schemes, where communication instances are fixed, can provide reliable control performance, unnecessary transmissions, when updates are not needed, result in inefficient usage of network resources.
	In this paper, we investigate the potential for co-design of model predictive control and network communication.
	To this end, we design and implement an event-triggered nonlinear model predictive controller for stabilizing a Furuta pendulum
	communicating over a tailored open radio access network 6G research platform.
    We analyze the control performance as well as network utilization under varying channel conditions and event-triggering criteria. Additionally, we analyze the network-induced delay pattern and its interaction with the event-triggered controller.
    Our results show that the event-triggered control scheme achieves similar performance to periodic control with reduced communication demand.
\end{abstract}
\section{INTRODUCTION}
\label{sec:introduction}

Networked Control Systems (NCS), enabling distributed wirelessly connected plant and controller architectures, are one major use case enabled by 5G and future 6G networks \cite{giordani2020toward}.
However, despite the advances in network development, delays, packet loss, limited bandwidth, energy consumption, and reliability remain a challenge for NCS and need to be addressed by co-design of control and communication~\cite{zhao2018toward}.
Co-design strategies include delay compensation \cite{varutti2014model}, acknowledgments for handling packet losses \cite{dolk2015dynamic}, energy-efficient communication policies \cite{varma2019energy}, and event-based control \cite{heemels2012introduction}.

Event-based control reduces the communication demand while maintaining the control performance \cite{heemels2012introduction} and thus enables networked control with limited network bandwidth. For an overview of event-based control we refer to \cite{heemels2012introduction,peng2018survey}.
Event-based control encompasses two primary approaches: event-triggered control and self-triggered control. 
In self-triggered control, the controller calculates the control input and the next recalculation time~\cite{marti2002improving,GOMMANS201559}.
In event-triggered control, on the other hand, an input is applied in sample-and-hold fashion until a predefined triggering condition is met.
This triggering criterion can either be computed by the sensor based on the state of the plant~\cite{heemels2012introduction,varutti2009event,wang2015,liu2019codesign}, or the event triggering mechanism can be placed at the controller with a criterion evaluated based on the control input~\cite{7922495}.
This paper uses an event-triggering mechanism placed at the sensor, which has the advantage of saving communication demand in both directions, sensor to controller and controller to actuator.

Event-based control is especially powerful in combination with Model Predictive Control (MPC), which computes its feedback by optimizing an open-loop trajectory that can be applied until a triggering criterion, based on the deviation of the state to the predicted trajectory is met~\cite{varutti2009event}. Despite an extensive theoretical analysis of event-triggered MPC and event-triggered control more broadly, most papers illustrate the proposed schemes via numerical simulations only~\cite{varutti2009event,7922495,wang2015,Balaghiinaloo2020,yang2018event}.

As highlighted in~\cite{Balaghiinaloo2020}, experimental studies are crucial for advancing research in NCS.
They provide valuable insights into control performance and communication resource requirements for control applications in both current and future wireless networks.
Only a limited number of works explore real hardware implementations: \cite{zhou2023, wang2021etmpc} implement their applications using wired experimental setups. 
Likewise, \cite{lehmann2011extension} evaluates event-based control in a real-world scenario with idealized communication assumptions. A hardware-in-the-loop distributed model predictive control experiment is presented in~\cite{grafe2022event}, using a simulated wireless network. 	
In~{\cite{altaf2011}}, an event-triggered scheme for a nonlinear system is implemented and its performance is compared to that of a periodic controller.
Plant and controller communicate over the contention-based low-power IEEE 802.15.4 standard for wireless personal area networks. 
The wireless personal area network ZigBee is also being used in \cite{9611090} for the event-triggered control networked control of autonomous surface vehicles swarms.
The work in \cite{8025403} implements ET-MPC for vehicle platooning using the IEEE 802.11p Wireless Local Area Network (WLAN) for vehicular communications.
In our previous work~{\cite{overbeck2024data}} we evaluated the impact of different radio resource scheduling strategies on delay and control performance of a delay-compensated periodic Nonlinear Model Predictive Control (NMPC) implementation. 

Extending our previous work, this paper investigates the potential of event-triggered control via 5G and future 6G networks.
To the best of our knowledge, this work is the first to implement event-triggered NMPC within a 5G network. 
The advantages of 5G and future 6G networks over contention-based networks, such as IEEE 802.11 (WLAN), for control applications include network slicing that reduces interference and congestion, better scalability, and reliably low latencies even under high network loads {\cite{9446078, Arendt2024DistributedPerformance}}.
We experimentally validate the controller and analyze the trade-off between control performance and network load considering varying triggering criteria and channel conditions in a state-of-the-art network environment. 
Specifically, we use a 5G Open Radio Access Network (O-RAN) that serves as a research platform towards future 6G communications \cite{Wagner2024a}.

The remainder of this paper is structured as follows: Section \ref{sec:problem_statement} formally defines the problem and outlines the design of the proposed controller, including a stability analysis. Section \ref{sec:experiments} provides details of the experimental setup, implementation, and results, along with a discussion of the findings. Finally, Section \ref{sec:conclusion} concludes the paper and discusses potential directions for future work.
\section{PROBLEM STATEMENT \& CONTROLLER DESIGN}
\label{sec:problem_statement}
We consider the task of controlling a nonlinear plant by an event-triggered controller connected via a communication network.
Consider the continuous-time nonlinear system
\begin{equation*}
	\dot{x}(t) = f(x(t),u(t)), \quad x(0)  = x_0,
\end{equation*}
with state constraints $x(t)\in \mathbb{X} \subseteq \mathbb{R}^{n}$ and admissible inputs $u(t) \in \mathbb{U} \subseteq \mathbb{R}^{m}$ for all $t \in \mathbb{R}$. 
In this work, we assume full state feedback is available.

\subsection{Event-Triggered NMPC Scheme}

The plant is connected to the networked controller as shown in Figure \ref{fig:networked_control_system}. In this setup, the sensor and actuator are connected to the plant and communicate with the controller via an O-RAN 6G research network. Additionally, we assume a connection between the actuator and the sensor, enabling the sensor to receive the input trajectories needed for event-triggering. Section \ref{sec:comm_setup} gives a detailed description of the network. The message sequence diagram is shown in Figure \ref{fig:message_sequence_diagram}.
At sampling time $t_k$, the sensor measures the state of the plant $x(t_k)$.
Based on the latest input trajectory and the corresponding state measurement, the sensor calculates the predicted state $\hat{x}(t_k)$.
If the deviation of predicted state and current measurement exceeds the event-triggering threshold $\varepsilon$, i.e., if $\lVert x(t_k) - \hat{x}(t_k) \rVert \geq \varepsilon$, the sensor sends the current measurement and its timestamp to the controller.

The controller receives the message with the varying uplink delay $\tau_{\mathrm{UL}}(t_k)$. 
Then, calculating the control law takes $\tau_{\mathrm{C}}$, and finally, the control trajectory is communicated back to the actor and applied to the plant, which introduces a delay of $\tau_{\mathrm{DL}}$. 

The assumption of constant controller and downlink delay allows for measuring their values in a preparatory step, which are then used for the delay compensation.
The experimental delay analysis findings in Section \ref{sec:experiments} confirm this assumption.
Thus, the overall round-trip time
\begin{equation*}
	\tau_{\mathrm{RTT}}(t_k) = \tau_{\mathrm{UL}}(t_k) + \tau_{\mathrm{C}} + \tau_{\mathrm{DL}}
\end{equation*}
is known after the packet has arrived at the controller at time $t_k+\tau_{\mathrm{UL}}(t_k)$ since the only time-varying component has realized.
It can be calculated by the controller based on the timestamp of the received state, assuming time synchronization between plant and controller.
We assume the RTT is bounded by $\tau_{\mathrm{RTT}}(t_k) \leq \bar{\tau}_\mathrm{RTT}$. With this the sensor requests an update of the input trajectory if the state deviation does not exceed the threshold and the input trajectory is about to expire in $\bar{\tau}_\mathrm{RTT}$. Using $\bar{\tau}_\mathrm{RTT}$, the sensor can send the new state to the controller such that its responses arrives before the currently available input trajectories expire.

The controller can then compensates the delay by forward predicting the state over the round-trip time (RTT) $\tau_{\mathrm{RTT}}(t_k)$
\begin{equation}
	\hat{x}(t_k + \tau_{\mathrm{RTT}}(t_k)) = x(t_k) + \int_{t_k}^{t_k + \tau_{\mathrm{RTT}}(t_k)} f(x(t), u(t)) \;\mathrm{d}t.
	\label{eq:delay_compensation}
\end{equation}
The input trajectory required for the delay compensation is stored in a buffer at the controller.
The key challenge in designing a network controller with stability guarantees is the input consistency: The input trajectory used for the delay compensation has to match the one applied to the plant in the closed loop in spite of delays and packet losses \cite{varutti2009event,varutti2014model}.

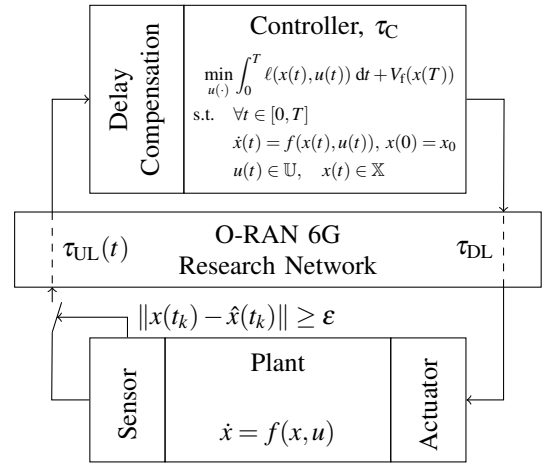
\begin{figure}[ht]
	\centering
	\begin{tikzpicture}[
        font=\sffamily,
        box/.style={draw=none, color=black, fill=Gray, fill opacity=0.3, text opacity=1, rectangle, minimum width=3cm, minimum height=1cm, align=center},
        arrow/.style={-Stealth, thick},
        doublearrow/.style={Stealth-Stealth, thick}
        ]
		\node[box, minimum width=3.75cm, minimum height=3.1cm, align=center] (controller) at (0.625, 2.6) {
			Controller, $\tau_{\mathrm{C}}$ \\[0.2cm]
        \scalebox{0.7}{$
        \begin{aligned}
		\min_{u(\cdot)} &\int_{0}^{T} \ell(x(t),u(t)) \; \mathrm{d}t + V_{\mathrm{f}}(x(T)) \\
		\text{s.t.} \quad & \forall t \in [0, T] \\
		&\dot{x}(t) = f(x(t), u(t)), \; x(0) = x_0 \\
		& u(t) \in \mathbb{U}, \quad x(t) \in \mathbb{X}
        \end{aligned}
        $}
		};
		\node[box, minimum height=1.25cm, minimum width=3.1cm, rotate=90, align=center] (delay) at (-1.885, 2.6) {Delay\\Compensation \eqref{eq:delay_compensation}};
		
		\node[box, minimum width=7cm, minimum height=1.2cm, align=center] (network) at (0, 0) {O-RAN 6G\\Research Network};
		
		\node[box, minimum width=3cm, minimum height=1.8cm, align=center] (system) at (0, -2.6) {
			Plant \\[0.25cm]
			$\dot{x} = f(x, u)$
		};
		\node[box, minimum width=1.8cm, minimum height=1cm, align=center, rotate=90] (actuator) at (2, -2.6) {Actuator};
        
        \node[box, minimum width=1.8cm, minimum height=1cm, align=center, rotate=90] (sensor) at (-2, -2.6) {Sensor};

        \draw[dashed, color=black] (delay.south west) -- (delay.south east);
        \draw[dashed, color=black] (sensor.south west) -- (sensor.south east);
        \draw[dashed, color=black] (actuator.north west) -- (actuator.north east);
    
        \coordinate (switch) at (-3.0, -1.30);
		\draw[-, thick] (sensor.north) |- ++(-0.49, 0) -- (switch);
        \draw[-, thick] (switch) -- ++ (0.125, 0.4);
        \coordinate (switch_top) at (-3.0, -0.90);
        \coordinate (switch_tap) at (-2.9375, -1.1); 
        \draw[arrow] (switch_top) --  (-3.0, -0.6);
        \draw[arrow] (sensor.east) |- (switch_tap); 
        \node at (-2, -1.25) [anchor=west] {$\lVert x(t_k) - \hat{x}(t_k) \rVert \geq \varepsilon$};
		\draw[arrow, dashed, color=black] (-3.0, -0.6) -- (-3.0, 0.6);
		\draw[arrow] (-3.0, 0.6)  -- ++ (0, 1.5) |- (delay.north);
		\node[color=black] at (-3.0, 0) [anchor=west] {$\tau_{\mathrm{UL}}(t)$};
		
		\draw[arrow] (controller.east) |- ++(0.5, 0) -- (3.0, 0.6);
		\draw[arrow, dashed, color=black]  (3.0, 0.6) -- (3.0, -0.6);
		\draw[arrow] (3.0, -0.6)  -- ++ (0, -1.5) |- (actuator.south);
		\node[color=black] at (3.0, 0) [anchor=east] {$\tau_{\mathrm{DL}}$};
	\end{tikzpicture}
	\caption{Networked Control System.}
	\label{fig:networked_control_system}
\end{figure}

\begin{figure}
	\centering
	\resizebox{\columnwidth}{!}{ 
	\begin{tikzpicture}[
			bracket/.style={
				solid,             
				postaction={decorate}, 
				decoration={ 
					markings,
					mark=at position 0 with {\draw[yshift=-3pt] (0,0) -- (0, 3pt);}, 
					mark=at position 1 with {\draw[yshift=-3pt] (0,0) -- (0, 3pt);}  
				}
			}
		]
		
		\draw[dashed] (-2,0) -- (-2,-10.0);
		\draw[dashed] (2,0) -- (2,-10.0);
		
		\node at (-2, 0.5) {Plant};
		\node at (2, 0.5) {Controller};
		
		\draw[thick] (-2.1, -0.5) -- (-1.9, -0.5) node[left, left=15pt] {$t_k$};
		
		\draw[-{Latex[length=2mm]},thick] (-2, -0.5) -- (2, -0.5-1.1) node[midway, above, align=center] {$x(t_k)$}; 
		\draw[thick] (2, -0.5-1.1) -- (2, -0.5-1.1-0.7);  
		\draw[-{Latex[length=2mm]},thick] (2, -0.5-1.1-0.7) -- (-2, -0.5-1.1-0.7-0.6)  node[midway, above, align=center] {$u(\cdot)$};; 
		
		\draw [bracket] (2.3,-0.5) --  (2.3, -1.6)  node [midway, right=5pt, align=left] {Uplink\\[-0.25em]delay $\tau_\mathrm{UL}(t_k)$}; 
		\draw [bracket] (2.3, -1.6) --  (2.3, -2.3)  node [midway, right=5pt, align=left] {Controller\\[-0.25em]calculation $\tau_\mathrm{C}$};
		\draw [bracket] (2.3, -2.3) --  (2.3, -2.9)  node [midway, right=5pt, align=left] {\baselineskip=0pt Downlink\\[-0.25em]delay $\tau_\mathrm{DL}$};
		
		\draw[thick, color=red] (-2.1, -4.0) -- (-1.9, -4.0) node[above left, left=15pt, color=red, align=right] {Event-trigger\\[-0.25em]$\scriptsize\lVert x(t_{k+1}) - \hat{x}(t_{k+1}) \rVert \geq \varepsilon$};
		\draw[-{Latex[length=2mm]},thick, color=red] (-2, -4.0) -- (2, -4.0-1.1); 
		\draw[thick, color=red] (2, -4.0-1.1) -- (2, -4.0-1.1-0.7);  
		\draw[-{Latex[length=2mm]},thick, color=red] (2, -4.0-1.1-0.7) -- (-2, -4.0-1.1-0.7-0.6)  node[midway, above, align=center] {}; 
		\draw[thick, color=blue] (-2.1, -5.4) -- (-1.9, -5.4) node[above left, left=15pt, color=blue, align=right] {Input timeout $\scriptsize t\geq$\\[-0.25em]$\scriptsize t_k+ \tau_{RTT}(t_k) + T_\mathrm{com} - \bar{\tau}_\mathrm{RTT}$};
		\draw[-{Latex[length=2mm]},thick, color=blue] (-2, -5.4) -- (2, -5.4-1.1); 
		\draw[thick, color=blue] (2, -5.4-1.1) -- (2, -5.4-1.1-0.7);  
		\draw[-{Latex[length=2mm]},thick, color=blue] (2, -5.4-1.1-0.7) --(-2, -5.4-1.1-0.7-0.6)  node[midway, above, align=center] {};  

		\draw [bracket] (-1.7, -2.9) -- (-1.7, -9.4)  node [midway, right=5pt, fill=white, align=left] {Input\\[-0.25em]trajectory\\[-0.25em]length $T_\mathrm{com}$};

		\draw [bracket] (-2.3, -9.4) -- (-2.3, -5.4)  node [midway, left=5pt, align=right] {Round-trip\\[-0.25em]bound$\bar{\tau}_{RTT}$};
		
		\draw [thick]  (-2.1, -9.4) -- (-1.9, -9.4) node [left=15pt, align=right] {End of input trajectory\\$t_k + \tau_{RTT}(t_k) + T_\mathrm{com}$};

	\end{tikzpicture}
	}
	\caption{Message sequence diagram of the networked controller.
    Two cases can trigger an event, either as depicted in red the state prediction error exceeds a threshold $\varepsilon$, or if the available control inputs are about to expire in $\bar{\tau}_\mathrm{RTT}$ depicted in blue.}
    \label{fig:message_sequence_diagram}
\end{figure}

The networked control Algorithm \ref{alg:networked_control_system} is an adaptation of \cite[Algorithm 1]{varutti2009event} for a deterministic downlink and controller delay.
The controller solves the continuous-time Optimal Control Problem (OCP)
\begin{subequations}
	\begin{align}
		\min_{u(\cdot)} &\int_{0}^{T} \ell(x(t),u(t)) \;\mathrm{d}t + V_{\mathrm{f}}(x(T)) \\
		\text{s.t.}  \quad &\forall t \in [0, T] \nonumber \\
		& \dot{x}(t) = f(x(t), u(t)), \quad x(0) = x_0 \\
		& u(t) \in \mathbb{U} \\
		& x(t) \in \mathbb{X} \\
		& x(T) \in \mathbb{X}_{\mathrm{f}}. \label{eq:ocp_Xf}
	\end{align} 
	\label{eq:ocp}%
\end{subequations}
The OCP objective consists of the stage cost $\ell: \mathbb{X} \times \mathbb{U} \rightarrow \mathbb{R}^+$ and terminal penalty $V_\mathrm{f}: \mathbb{X} \rightarrow \mathbb{R}^+$, which we design in Section \ref{sec:system_model_and_ocp_design}.
The controller then sends the first part of the optimal input trajectory $u^\star(t)$ for all $t \in \left[0,T_{\text{com}}\right]$, where  $T_{\text{com}}$ is the communicated horizon.

\renewcommand{\algorithmicindent}{1em}
\begin{algorithm}
	\caption{Networked Control System Algorithm}
    \label{alg:networked_control_system}    
    \textbf{Preparation}\\
   	Measure downlink $\tau_{\mathrm{DL}}$ and controller calculation $\tau_{\mathrm{C}}$ delays
	\textbf{Sensor}
	\begin{algorithmic}[1]
		\FOR{each sampling time $t_k$}
		\STATE Measure state $x(t_k)$
		\STATE Determine the maximum $t_{\mathrm{u,max}}$ duration of any available input trajectory
		\STATE Compute $\hat{x}(t_k)$ based on the latest input trajectory $(u(\cdot), t_x)$ and the corresponding state measurement $x(t_x)$
		\IF{ $t_{k+1} \geq t_{\mathrm{u,max}} + \bar{\tau}_\mathrm{RTT}$ or $\lVert x(t_k) - \hat{x}(t_k) \rVert \geq \varepsilon$ }
		\STATE Send the packet $(x(t_k), t_k)$ to the controller
		\STATE Save the current state measurement $(x_k, t_k)$
		\ENDIF
		\ENDFOR
	\end{algorithmic}
	\textbf{Controller}
	\begin{algorithmic}[1]
		\STATE \texttt{control\_input\_buffer} $= \{(u^\star(\cdot), t_0 + \tau_{RTT}(t_0))\}$
		\STATE $t_{\mathrm{old}} = t_0$
		\WHILE{True}
		\STATE Wait for the packet $(x(t_k), t_k)$
		\IF{$t_k > t_{\mathrm{old}}$}
		\STATE $t_{\mathrm{old}} = t_k$
		\STATE Calculate the expected arrival time $\tau_{\mathrm{RTT}}(t_k)\!=\!t\!-\!t_k\!+\!\tau_{\mathrm{C}}\!+\!\tau_{\mathrm{DL}}$
		\STATE Apply the delay compensation \eqref{eq:delay_compensation} to obtain $\hat{x}(t_k + \tau_{\mathrm{RTT}}(t_k))$
		\STATE Solve OCP~\eqref{eq:ocp} with $x_0 = \hat{x}(t_k + \tau_{\mathrm{RTT}}(t_k))$
		\STATE Extract the first part of the optimal input trajectory $u(\cdot) \leftarrow u^\star(t), t \in \left[t_k +  \tau_{\mathrm{RTT}}(t_k), t_k +  \tau_{\mathrm{RTT}}(t_k) + T_{com} \right]$
		\STATE Send the first part of the input, the state timestamp, and the expected arrival time to the actuator as $(u(\cdot), t_k, t_k +  \tau_{\mathrm{RTT}}(t_k))$
		\STATE Append $(u(\cdot), t_k\!+\!\tau_{\mathrm{RTT}}(t_k))$ to the \texttt{control\_input\_buffer}
		\ENDIF
		\ENDWHILE
	\end{algorithmic}
	\textbf{Actuator}
	\begin{algorithmic}[1]
		\WHILE{True}
		\STATE Receive the input trajectory packet $(u(\cdot), t_k,  t_k\!+\!\tau_{\mathrm{RTT}}(t_k))$
		\STATE Start applying the newly received control input trajectory $u(\cdot)$ at $t_k\!+\!\tau_{\mathrm{RTT}}(t_k)$
		\STATE Provide the input trajectory to the sensor $(u(\cdot), t_k, t_k + t_{\mathrm{RTT}}(t_k))$
		\ENDWHILE
	\end{algorithmic}
\end{algorithm}

To initialize the delay compensation, the control input applied to the system before the first package is received needs to be known, which can be practically achieved, e.g., by starting with a zero input on both controller and actuator side.

\subsection{Nominal Stability Analysis}
We now turn towards the nominal stability guarantees of the proposed scheme. The next result can be found in {\cite{varutti2009event}}.
\begin{theorem}[Nominal Stability {\cite[Theorem 4.1]{varutti2009event}}]
    \label{thm:stability_with_terminal_constraint}
	Consider Algorithm \ref{alg:networked_control_system} using OCP \eqref{eq:ocp} with terminal penalty $V_{\mathrm{f}}(x)$ and terminal region $\mathbb{X}_{\mathrm{f}}$ such that
	\begin{enumerate}[i)]
		\item $V_{\mathrm{f}} \in C^1$, $V_{\mathrm{f}}(0) = 0$, and $\mathbb{X}_{\mathrm{f}} \subset X$ is closed, connected and contains the origin
		\item There exist $T$ such that $0 < \delta \leq t_{i+1} - t_{i} < T$, for all $i \in \mathbb{N}$ and some $\delta \in \mathbb{R}^+$
            \item For all $k$, $\tau_{RTT}(t_k )\leq \bar{\tau}_\mathrm{RTT}$ 
		\item For all $x_0 \in \mathbb{X}_{\mathrm{f}}$, there exists $u(\tau) \in U$, $\tau \in [0, T]$ such that
		\begin{subequations}
			\begin{align}
				&x(\tau) \in \mathbb{X}_{\mathrm{f}}, \\
				&\dot{x}(\tau) = f(x(\tau), u(\tau)), \quad x(0) = x_0, \\
				&\frac{\partial V_{\mathrm{f}}}{\partial x} f(x(\tau), u(\tau)) + \ell(x(\tau), u(\tau)) \leq 0 
			\end{align}
		\end{subequations}
		\item OCP~\eqref{eq:ocp} is feasible at $t_0$.
	\end{enumerate}
	Then $\lim_{t\rightarrow \infty} \lVert x(t)\rVert \rightarrow 0$.
\end{theorem}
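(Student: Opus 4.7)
The plan is to adapt the standard receding-horizon Lyapunov argument with terminal ingredients to the event-triggered, delay-compensated setting. Let $V^\star(t_i)$ denote the optimal value of OCP~\eqref{eq:ocp} solved at sampling instant $t_i$ with initial state $\hat{x}(t_i + \tau_{\mathrm{RTT}}(t_i))$. First I would establish recursive feasibility between consecutive triggering instants $t_i$ and $t_{i+1}$ by a candidate-solution construction; second I would derive a strict decrease of $V^\star$ along the triggering sequence; finally I would invoke the lower bound $\delta$ on the inter-event times together with positive definiteness of $\ell$ to conclude convergence.

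For the feasibility step, let $u_i^\star(\cdot)$ and $x_i^\star(\cdot)$ be the optimizer and induced trajectory at $t_i$, so $x_i^\star(T)\in\mathbb{X}_{\mathrm{f}}$ by \eqref{eq:ocp_Xf}. Denote the shift in the prediction frame by $\Delta_i := (t_{i+1}+\tau_{\mathrm{RTT}}(t_{i+1})) - (t_i+\tau_{\mathrm{RTT}}(t_i))$; assumption ii) together with iii) bounds $\delta \leq \Delta_i < T$. The candidate input at $t_{i+1}$ is the tail $u_i^\star(\cdot+\Delta_i)$ on $[0,T-\Delta_i]$ concatenated with the auxiliary terminal controller provided by iv) on $[T-\Delta_i,T]$. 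Because delays on the controller-to-actuator path are deterministic and the plant matches the model in the nominal case, the delay-compensated state $\hat{x}(t_{i+1}+\tau_{\mathrm{RTT}}(t_{i+1}))$ coincides exactly with $x_i^\star(\Delta_i)$, so the candidate solution propagates along $x_i^\star$ until $T-\Delta_i$ and then stays in $\mathbb{X}_{\mathrm{f}}$ by the invariance embedded in iv). Input and state constraints are inherited from $u_i^\star$ on the first piece and from the terminal feedback on the second piece.

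Optimality at $t_{i+1}$ then yields the bound
\[
V^\star(t_{i+1}) - V^\star(t_i) \;\leq\; -\!\int_0^{\Delta_i}\!\!\ell(x_i^\star,u_i^\star)\,\mathrm{d}\tau + \!\int_{T-\Delta_i}^{T}\!\!\!\ell(\tilde x,\tilde u)\,\mathrm{d}\tau + V_{\mathrm{f}}(\tilde x(T)) - V_{\mathrm{f}}(x_i^\star(T)),
\]
where $\tilde x,\tilde u$ is the terminal-controller tail. Assumption iv) applied along $(\tilde x,\tilde u)$ makes the sum of the last three terms non-positive, leaving $V^\star(t_{i+1}) \leq V^\star(t_i) - \int_0^{\Delta_i}\ell(x_i^\star,u_i^\star)\,\mathrm{d}\tau$. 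Since $\ell\geq 0$, $V^\star$ is monotonically non-increasing; since $\Delta_i\geq\delta>0$ and $V^\star$ is bounded below by zero, the accumulated stage cost must vanish, which together with positive definiteness of $\ell$ and continuity of $f$ implies $\lim_{t\to\infty}\|x(t)\|=0$ through a standard Barbalat-type argument.

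The main obstacle I expect is not the Lyapunov algebra but aligning the prediction clock with the actual plant clock so that the tail-plus-terminal-controller construction does produce the initial condition used by the OCP at $t_{i+1}$. This hinges on the input consistency emphasized after~\eqref{eq:delay_compensation}: the controller's \texttt{control\_input\_buffer} must contain precisely the trajectory that the actuator will apply on $[t_i+\tau_{\mathrm{RTT}}(t_i),\,t_{i+1}+\tau_{\mathrm{RTT}}(t_{i+1})]$, with no overlap, gap, or stale overwrite. Algorithm~\ref{alg:networked_control_system}, together with the RTT bound $\bar\tau_{\mathrm{RTT}}$ from iii), the preparatory measurement of $\tau_{\mathrm{C}}$ and $\tau_{\mathrm{DL}}$, the timestamp comparison $t_k>t_{\mathrm{old}}$ discarding out-of-order packets, and the input-timeout trigger forcing a fresh packet whenever the buffer is about to expire, is engineered precisely to enforce this consistency; once it is granted, the remainder of the proof is the standard terminal-ingredient argument sketched above.
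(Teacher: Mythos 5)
The paper does not prove this theorem itself---it imports it verbatim from \cite[Theorem~4.1]{varutti2009event}---and your reconstruction is exactly the standard shifted-tail/terminal-controller Lyapunov argument used there, including the key observation that nominal input consistency makes the delay-compensated initial condition $\hat{x}(t_{i+1}+\tau_{\mathrm{RTT}}(t_{i+1}))$ coincide with $x_i^\star(\Delta_i)$. The only imprecision is the claim that ii) and iii) directly give $\delta \leq \Delta_i < T$: since $\Delta_i$ contains the difference of two round-trip times, this bound actually requires the input-timeout mechanism and the bound $\bar{\tau}_{\mathrm{RTT}}$ rather than following immediately from the stated assumptions, but this does not alter the structure of the argument.
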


We now extend the previous analysis to derive stability without terminal region.
\begin{theorem}[Nominal Stability without Terminal Region]
    Consider Algorithm \ref{alg:networked_control_system} using OCP \eqref{eq:ocp} such that stability is guaranteed by Theorem \ref{thm:stability_with_terminal_constraint} for the terminal region 
    $\mathbb{X}_{\mathrm{f}}  = \left\{x \in \mathbb{X} \,\middle|\, V_{\mathrm{f}}(x) \leq \gamma \right\}$ 
    with $\gamma > 0$. 
    Then, for any initial condition from which $x=0$ can be reached in finite time, there exists $\beta < \infty$, such that the OCP without terminal constraint $\mathbb{X}_{\mathrm{f}} = \mathbb{X}$ and weighted terminal penalty $V_{\mathrm{f},\beta}(x) = \beta V_{\mathrm{f}}(x)$ guarantees asymptotic convergence, i.e. $\lim_{t\rightarrow \infty} \lVert x(t)\rVert \rightarrow 0$. 
\end{theorem}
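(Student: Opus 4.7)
The plan is to show that for a sufficiently large weight $\beta$, the optimal solution of the penalized OCP without terminal constraint automatically satisfies $x^\star(T) \in \mathbb{X}_{\mathrm{f}}$ at every triggering instant, so that the hypotheses of Theorem~\ref{thm:stability_with_terminal_constraint} apply verbatim with $V_{\mathrm{f},\beta}$ in place of $V_{\mathrm{f}}$.

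First, I would construct a reference trajectory from the initial state. By assumption there exists an admissible $(\tilde x(\cdot), \tilde u(\cdot))$ steering $x_0$ to the origin in finite time $T_r \le T$, which can be extended by $u\equiv 0$ on $[T_r, T]$ (using $f(0,0)=0$ and $0\in\mathbb{U}$, or otherwise a control that keeps the origin invariant via condition iv)). Let $\tilde J := \int_0^T \ell(\tilde x(t),\tilde u(t))\,\mathrm{d}t$. Since the weighted terminal penalty vanishes at the origin, the candidate yields the uniform bound $V^\star_\beta(x_0) \le \tilde J$, independent of $\beta$. A contradiction argument then shows that for $\beta \ge \tilde J/\gamma$ the optimal terminal state satisfies $x^\star_\beta(T;x_0) \in \mathbb{X}_{\mathrm{f}}$: if not, $V_{\mathrm{f}}(x^\star_\beta(T;x_0))>\gamma$ would force the terminal cost alone to exceed $\beta\gamma \ge \tilde J \ge V^\star_\beta(x_0)$, a contradiction.

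Second, I would verify that hypotheses i) and iv) of Theorem~\ref{thm:stability_with_terminal_constraint} remain valid after scaling. Condition i) is immediate. For iv), the original inequality gives $\tfrac{\partial V_{\mathrm{f}}}{\partial x}f(x,u) \le -\ell(x,u) \le 0$, hence for any $\beta \ge 1$
\begin{equation*}
\beta\,\tfrac{\partial V_{\mathrm{f}}}{\partial x}f(x,u) + \ell(x,u) \;\le\; (1-\beta)\,\ell(x,u) \;\le\; 0,
\end{equation*}
so the same feasible terminal trajectory satisfies the scaled descent. Consequently, choosing $\beta \ge \max\{1,\tilde J/\gamma\}$ makes the implicit terminal constraint active at $t_0$ and renders Theorem~\ref{thm:stability_with_terminal_constraint} applicable to the penalized problem, from which $\lim_{t\to\infty}\lVert x(t)\rVert = 0$ would follow.

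The main obstacle is propagating the bound $V^\star_\beta(\cdot) \le \tilde J$ across successive triggering times $t_k$: the unconstrained OCP is re-solved at each event, and one must ensure that $x^\star_\beta(T;x(t_k)) \in \mathbb{X}_{\mathrm{f}}$ persists recursively. I would close this loop by establishing the standard descent $V^\star_\beta(x(t_{k+1})) \le V^\star_\beta(x(t_k)) - \int_{t_k}^{t_{k+1}} \ell(x,u)\,\mathrm{d}t$, which is licit at $t_k$ precisely because $x^\star_\beta(T;x(t_k)) \in \mathbb{X}_{\mathrm{f}}$ allows the concatenation of the optimal tail with the auxiliary trajectory from condition iv); monotonicity then transfers the bound $V^\star_\beta(x(t_{k+1})) \le \tilde J$ to the next event, where the contradiction argument is reapplied. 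A minor technicality is that Theorem~\ref{thm:stability_with_terminal_constraint} assumes event spacing lower-bounded by $\delta>0$, which is built into the triggering law by the bounded-RTT timeout, so no extra assumption is needed.
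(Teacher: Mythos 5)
Your proposal is correct and follows essentially the same route as the paper: the paper's proof is only a two-sentence sketch that invokes a continuous-time extension of Limón et al.'s Theorem~3 --- for $\beta$ sufficiently large the terminal constraint~\eqref{eq:ocp_Xf} is implicitly satisfied by the unconstrained optimizer, after which Theorem~\ref{thm:stability_with_terminal_constraint} applies --- and your candidate-trajectory bound $V^\star_\beta(x_0)\le\tilde J$, the contradiction argument via $\beta\ge\tilde J/\gamma$, the scaling check of condition~iv), and the recursive propagation through the value-function descent are precisely the ingredients of that cited result written out. The only point worth noting is that your reading implicitly strengthens ``reachable in finite time'' to ``reachable within the horizon $T$'' ($T_r\le T$), which is the same tacit assumption the paper (and the cited reference) relies on.
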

\begin{proof}
Due to space limitations we only give a brief sketch of the proof. 
The NMPC scheme is an adaptation of Algorithm~\ref{alg:networked_control_system}. A straight-forward continuous-time extension of \cite[Theorem 3]{limon2006stability} shows convergence without delays, whereby, if $\beta$ is chosen sufficiently large, the terminal constraint~\eqref{eq:ocp_Xf} is satisfied without being explicitly considered. Closed-loop convergence follows by Theorem~\ref{thm:stability_with_terminal_constraint}.
\end{proof}
In the nominal case, where the system model equals the plant and no disturbances are present, the predicted and actual states will be identical. 
Hence, no events will be triggered by the event-triggering mechanism, and thus, the nominal stability analysis is independent of the event-triggering condition and the event-triggering threshold. 
For event-triggered MPC with bounded disturbances, we refer to {\cite{7922495,li2014event, 10141876}}.

\section{EXPERIMENTS}
\label{sec:experiments}
Next, we discuss the experiments for the proposed event-triggered NMPC controller. 
We begin by presenting the experimental setup, which comprises a rotational inverted pendulum and our tailored O-RAN 6G research network. This is followed by a detailed analysis of the experimental results.

\subsection{System Model and OCP Design}
\label{sec:system_model_and_ocp_design}
The control plant used in our experiments is a rotational inverted pendulum, specifically a Furuta pendulum~\cite{furuta1992}, the Quanser Servo 2, which is one of the most popular control benchmarks \cite{doi:10.5772/55058}. 
This system consists of a motor-driven rotary arm to which a pendulum link is attached, see Figure~\ref{fig:furuta_sketch}. The state vector $x=(\theta, \alpha, \dot{\theta}, \dot{\alpha})^\top \in \mathbb{R}^4$ comprises the rotary arm angle $\theta  \in \left[-\pi/2, \pi/2\right]$, the pendulum arm angle $\alpha$, and their respective angular velocities $\dot{\theta}$ and $\dot{\alpha}$. 
The angular positions $\theta$ and $\alpha$ are obtained from incremental encoders, the velocity of the rotary pendulum arm $\dot{\theta}$ is obtained from a digital tachometer, and the pendulum arm velocity $\dot{\alpha}$ is obtained from fixed-position differences of the pendulum arm incremental encoder.
The link lengths are denoted by $L_p$ and $L_r$. The control input $u \in \left[-7.5, 7.5\right]$ is the voltage applied to the motor that drives the rotary arm.
\begin{figure}
    \centering
    \begin{subfigure}[b]{0.32\columnwidth}
        \centering
         \includegraphics[width=\textwidth]{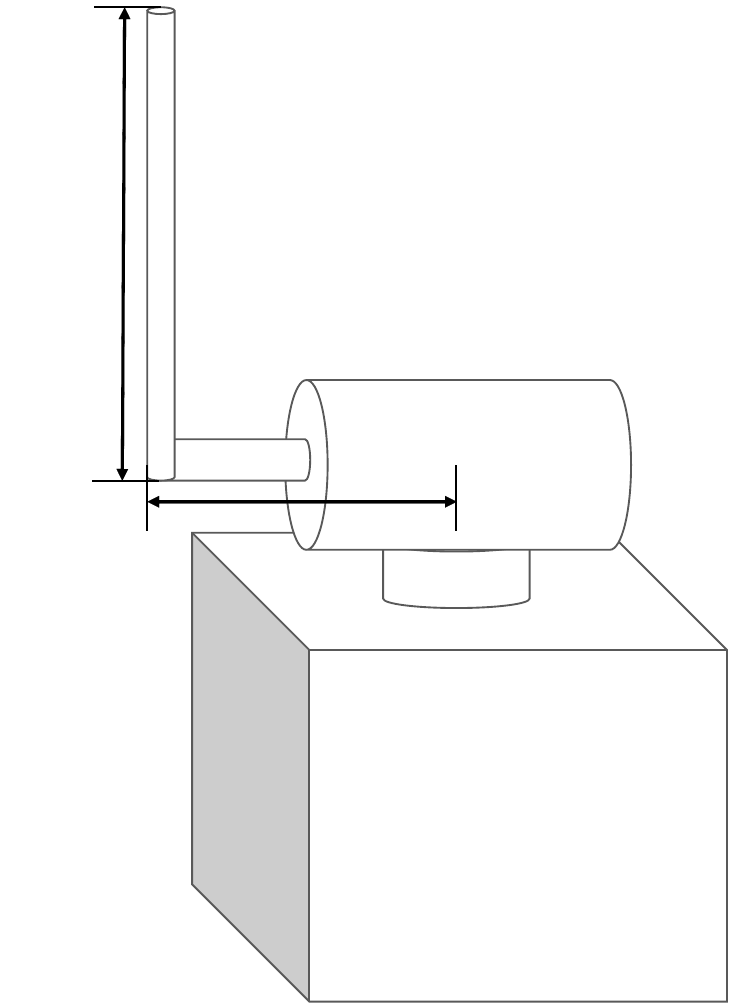}
        \caption{Side view}
        \label{fig:furuta_swing}
    \end{subfigure}
    \hfill
    \begin{subfigure}[b]{0.32\columnwidth}
        \centering
        \includegraphics[width=0.95\textwidth]{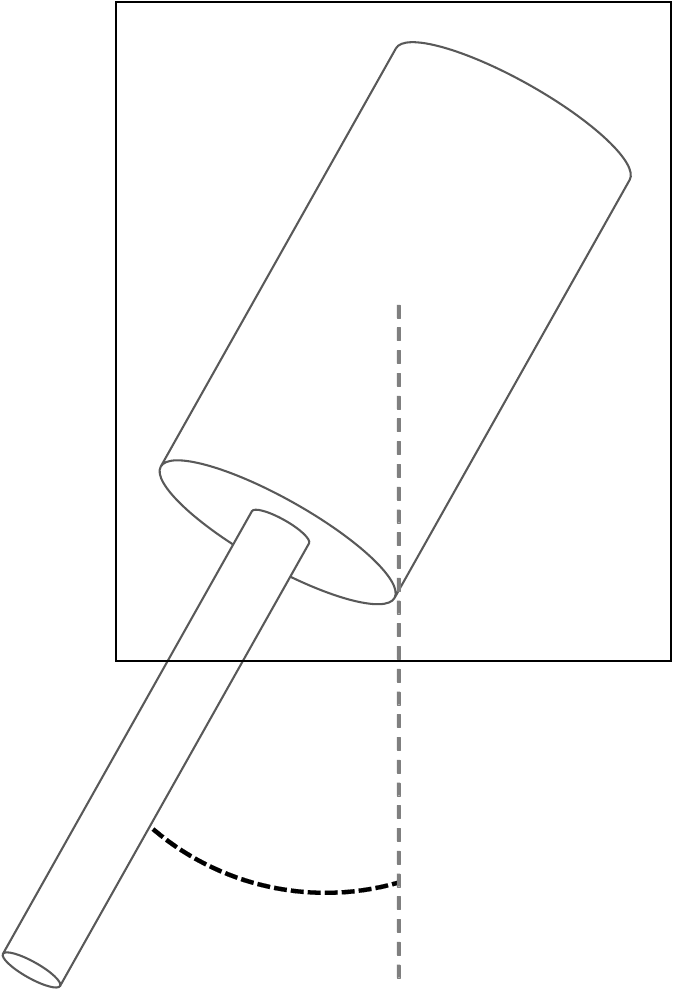}
        \caption{Top view}
        \label{fig:furuta_top}
    \end{subfigure}
    \hfill
    \begin{subfigure}[b]{0.32\columnwidth}
        \centering
        \def\svgwidth{0.7\textwidth}
        \includegraphics[width=0.7\textwidth]{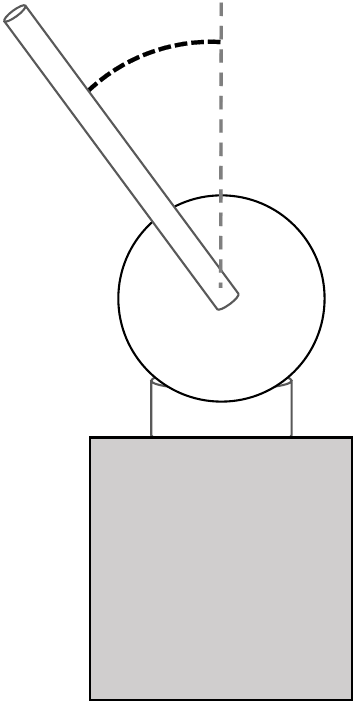}
        \caption{Front view}
        \label{fig:furuta_front}
    \end{subfigure}
    \caption{Furuta pendulum schematic with length $L_r$ and angle $\theta$ for the rotary arm as well as length $L_p$ and angle $\alpha$ for the pendulum arm.}
    \label{fig:furuta_sketch}
\end{figure}

The Furuta pendulum dynamics $\dot{x} = f(x, u)$ are derived similar to \cite{furuta1992} and the resulting nonlinear equations of motion read
\renewcommand{\sin}{\operatorname{s}}
\renewcommand{\cos}{\operatorname{c}}
\begin{subequations}
    \begin{equation}
    \begin{split}
        \left( m_p L_r^2 + \frac{1}{4}m_pL_p^2 \sin^2(\alpha) + J_r \right) \ddot{\theta} 
        - \frac{1}{2}m_pL_pL_r\cos(\alpha)\ddot{\alpha} &\\+ \frac{1}{2}m_pL_pL_r\sin(\alpha)\dot{\alpha}^2 
        + \frac{1}{2}m_pL_p^2\sin(\alpha)\cos(\alpha)\dot{\alpha}\dot{\theta} = \tau - B_r\dot{\theta},
    \end{split}
    \end{equation}
    \begin{equation}
    \begin{split}
        -\frac{1}{2}m_pL_pL_r\cos(\alpha) \ddot{\theta} + \left( J_p + \frac{1}{4}m_pL_p^2 \right) \ddot{\alpha} &\\
        - \frac{1}{4}m_pL_p^2\cos(\alpha)\sin(\alpha)\dot{\theta}^2 
        - \frac{1}{2}m_pL_pg\sin(\alpha) = -B_p\dot{\alpha},
    \end{split}
    \end{equation}
	\begin{equation}
        \tau = \frac{k_t \left(u - k_m \cdot \dot{\theta}\right)}{R_m}.
	\end{equation}
    \label{eq:eom}%
\end{subequations}
We use $s(\cdot)$ and $c(\cdot)$ to denote the sine and cosine function respectively.
The parameters and their values are given in Table~\ref{tab:system_parameters}.\\

\begin{table}[h!]
	\centering
	\caption{Parameters of the Furuta Pendulum.}
	\label{tab:system_parameters}
	\begin{tabular}{ccl}
		\toprule
		\textbf{Parameter} & \textbf{Value} & \textbf{Description} \\ 
		\midrule
		$R_m$              & 8.4 $\Omega$          & Resistance \\
		$k_t$              & 0.042 N$\cdot$m/A     & Current-torque constant \\
		$k_m$              & 0.042 V$\cdot$s/rad   & Velocity constant \\
		$m_r$              & 0.095 kg              & Mass of rotary arm \\
		$L_r$              & 0.085 m               & Total length of rotary arm \\
		$J_r$              & $5.72 \times 10^{-5}$ kg$\cdot$m$^2$ & Moment of inertia of rotary arm \\
		$D_r$              & $2.7 \times 10^{-4}$ m$\cdot$s/rad & Viscous damping of rotary arm \\
		$m_p$              & 0.024 kg              & Mass of pendulum link \\
		$L_p$              & 0.129 m               & Total length of pendulum link \\
		$J_p$              & $3.33 \times 10^{-5}$ kg$\cdot$m$^2$ & Moment of inertia of pendulum link \\
		$D_p$              & $5 \times 10^{-5}$ N$\cdot$m$\cdot$s/rad & Viscous damping of pendulum link \\
		$g$                & 9.81 m/s$^2$          & Gravitational acceleration \\
		\bottomrule
	\end{tabular}
\end{table}

We design OCP \eqref{eq:ocp} to stabilize the system at its upper equilibrium $x_\text{goal}=(0, 0, 0, 0)^\top$, starting from its lower equilibrium $x_0=(0, -\pi, 0, 0)^\top$. This requires a swing-up from the initial position. The stage cost and terminal penalty are defined as $\ell(x(t), u(t)) = 1/2\, x(t)^\top Q x(t) + 1/2\, R u(t)^2$ and $V_\mathrm{f}(x) = 1/2\, x^\top P x$, respectively, where $Q = \text{diag}(1.6, 1.6, 0.1, 0.01)$, $R = 0.4$, and $P$ is the solution of the algebraic Riccati equation for the linearized dynamics at the upper equilibrium. We observe stability for $\beta=1$, since the prediction horizon is long enough to swing the pendulum up.
State and input constraints are imposed based on the physical limitations of the system, maximum motor torque, and rotary arm angle. The state constraints on the rotary arm angle are implemented as soft constraints. 
The slacks $s_{\mathrm{ub}}, s_{\mathrm{lb}}: [0, T] \rightarrow \mathbb{R}^{+}$ for the upper bound $\theta(t)-\pi/2 \leq s_{\mathrm{ub}}(t)$ and lower bound $ s_{\mathrm{lb}}(t) \geq \theta(t)+\pi/2$ are penalized within the stage cost by $\ell_{\mathrm{s}}(x) = 0.1 \cdot (s_{\mathrm{lb}}^2 + s_{\mathrm{ub}}^2) +  s_{\mathrm{lb}} +  s_{\mathrm{ub}}$. 
For real-time computation, the resulting OCP is implemented in the Acados framework~\cite{Verschueren2021}. 
We use a prediction horizon of $T= \SI{2}{\second}$ with $N = 50$ stages and a variable sampling time. The first 15 stages are discretized with a finer sampling interval of $\Delta t=\SI{20}{\milli\second}$, while the remaining trajectory uses a coarser discretization with $\Delta t=\SI{42,5}{\milli\second}$. 
The length of the communicated trajectories $T_{\text{com}} = \SI{300}{\milli\second}$ is designed to match the finer discretization of the first 15 stages. 
To trigger control law recalculations before the currently available input trajectories expire, a round-trip time bound of $\bar{\tau}_{\mathrm{RTT}}=\SI{100}{\milli\second}$ is used.

Finally, we use the closed-loop cost $J_{\text{CL}}$ as a metric to evaluate control performance over the experiment duration $T_{\text{CL}}$ in our experiments:
\begin{align}
    J_{\mathrm{CL}}=\frac{1}{T_{\mathrm{CL}}}\int_{0}^{T_{\mathrm{CL}}} \ell(x(t),u(t)) \; \mathrm{d}t.
    \label{eq:clc}
\end{align}

\subsection{Communication Setup}
\label{sec:comm_setup}
For the real-world O-RAN communication testbed, we employ an open 5G stack as a basis, allowing for flexible adaptation of cell parameters to the specific needs of control systems, including low latency and reduced jitter, i.e., variance of latency  \cite{overbeck2024data}.
Cellular networks, like 5G, are preferred over contention-based networks, such as Wi-Fi, due to their reliably low latencies even under high network loads \cite{9446078, Arendt2024DistributedPerformance}.

The 5G system is realized using the O-RAN stack \textit{srsRAN Project} \cite{srsRAN}, the core network Open5GS, and an NI USRP X310 Software-Defined Radio (SDR).
We employ a commercial-grade Quectel RM520Q-GL 5G modem to provide connectivity for the pendulum.
To lower average latency and jitter, we tune the reactive uplink scheduling strategy so that the modem can request uplink resources at the radio resource scheduling of the base station every \SI{4}{\milli\second}.
The adaption comes at the cost of higher total radio resources due to overhead by the high scheduling frequency.
To reduce latency induced by waiting times, we set tight timing requirements on the modem.
This includes the time between uplink resource grant and uplink data transmission, as well as the delay for downlink data acknowledgments.
On transport layer, we use the User Datagram Protocol (UDP).

To challenge the control algorithm with realistic channel conditions, we employ a Propsim F64 real-time radio channel emulator realizing a Tapped-Delay-Line C fading channel with a delay spread of 300 and maximum doppler frequency of \SI{100}{\hertz} as defined by the 3rd Gen. Partnership Project (3GPP) standardization in \cite{tr38.901}.
Additionally, we challenge the algorithm with different Signal-to-Noise-Ratios (SNRs) to force packet errors of up to \SI{80}{\percent}, resulting in either higher latency in case of packet re-transmission by the Hybrid-Automatic-Repeat Request (HARQ) mechanism of 5G or result in packet loss if undetected.
Hence, we can provide realistic scenarios including far cell-edge deployments of the control application.
For reproducible conditions, we setup an SNR with variable interference power distributed as Additive White Gaussian Noise (AWGN) to force a given SNR, resulting in bit errors.
Hence, the modulation and coding is successively reduced by the channel adaptation mechanisms to reduce packet failures up to the most robust possible coding, after which errors inevitably appear.
Before, the rapidly changing channel defined by the channel model leads to minor errors.
\begin{figure*}
        \centering
        \includegraphics[width=\textwidth, trim=0cm 0.2cm 0cm 0.2cm, clip]{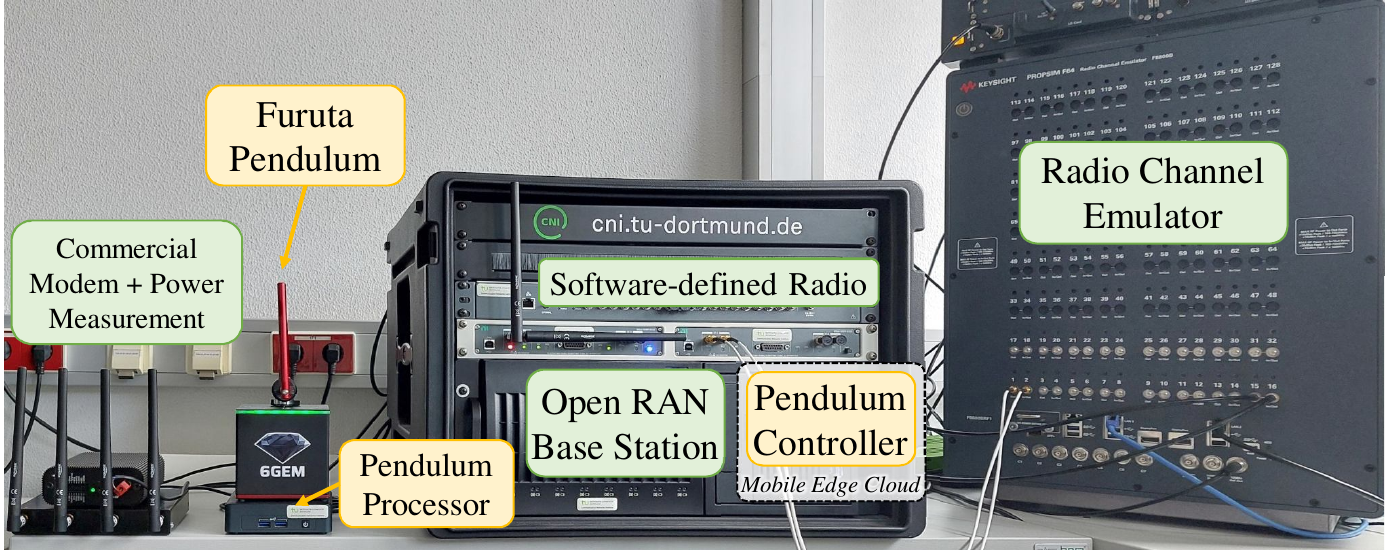}
        \caption{Experimental setup comprising the Furuta pendulum and the O-RAN 6G communication research platform.}
        \label{fig:exp_setup}
\end{figure*}

\subsection{Experimental Results} 
We analyze the trade-off between closed-loop cost~{\eqref{eq:clc}} and the communication demand relative to periodic control for varying event-triggering thresholds $\epsilon$. 
We also investigate the impact of the varying channel conditions on the delay pattern and the resulting closed-loop control performance.
Before the experiments are started, we measure the sum of the downlink and controller calculation delays by averaging delays using a channel without additional noise resulting in $\tau_{\mathrm{DL}}+\tau_{\mathrm{C}}=\SI{9.31}{\milli\second}$. This value is used as reference within the delay compensation for the constant delays.
The results are shown in Figure~\ref{fig:results_eps} and~\ref{fig:results_snr}, respectively. 
Each bar shows the average value for a total of five measurements, where each run consists of swinging up and stabilizing the Furuta pendulum for an experiment duration of $T_{\mathrm{CL}}= \SI{30}{\second}$.\\
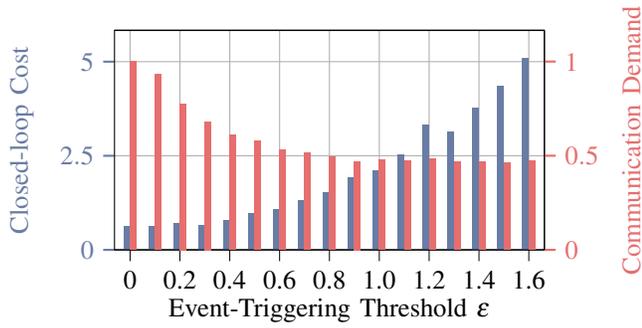
\begin{figure}
\begin{tikzpicture}

\definecolor{darkgray176}{RGB}{176,176,176}
\definecolor{lightslategray105124165}{RGB}{105,124,165}
\definecolor{salmon230110110}{RGB}{230,110,110}

\begin{axis}[
    width=2.25in,
    height=1.75in,
    scale only axis,
    tick align=outside,
    tick pos=left,
    x grid style={darkgray176},
    xlabel={Event-Triggering Threshold $\epsilon$},
    xmin=-0.5, xmax=16.75,
    xtick style={color=black},
    xtick={0.125,2.125,4.125,6.125,8.125,10.125,12.125,14.125,16.125},
    xticklabels={0,0.2,0.4,0.6,0.8,1.0,1.2,1.4,1.6},
    y grid style={darkgray176},
    ylabel=\textcolor{lightslategray105124165}{Closed-loop Cost},
    ymin=0, ymax=4.18022707176845,
    ytick={0,1,2,3,4},
    ytick style={color=lightslategray105124165, thick},
    yticklabel style={color=lightslategray105124165},
    axis line style={-},
    grid=both
]
\draw[draw=none,fill=lightslategray105124165] (axis cs:-0.125,0) rectangle (axis cs:0.125,0.58605796247618);
\draw[draw=none,fill=lightslategray105124165] (axis cs:0.875,0) rectangle (axis cs:1.125,0.615105056421133);
\draw[draw=none,fill=lightslategray105124165] (axis cs:1.875,0) rectangle (axis cs:2.125,0.619881728413371);
\draw[draw=none,fill=lightslategray105124165] (axis cs:2.875,0) rectangle (axis cs:3.125,0.636956440233698);
\draw[draw=none,fill=lightslategray105124165] (axis cs:3.875,0) rectangle (axis cs:4.125,0.711430367227703);
\draw[draw=none,fill=lightslategray105124165] (axis cs:4.875,0) rectangle (axis cs:5.125,0.835314906538979);
\draw[draw=none,fill=lightslategray105124165] (axis cs:5.875,0) rectangle (axis cs:6.125,1.02737299698927);
\draw[draw=none,fill=lightslategray105124165] (axis cs:6.875,0) rectangle (axis cs:7.125,1.10064937027196);
\draw[draw=none,fill=lightslategray105124165] (axis cs:7.875,0) rectangle (axis cs:8.125,1.32491485323863);
\draw[draw=none,fill=lightslategray105124165] (axis cs:8.875,0) rectangle (axis cs:9.125,1.48249395426867);
\draw[draw=none,fill=lightslategray105124165] (axis cs:9.875,0) rectangle (axis cs:10.125,1.87585644481786);
\draw[draw=none,fill=lightslategray105124165] (axis cs:10.875,0) rectangle (axis cs:11.125,2.01428277813516);
\draw[draw=none,fill=lightslategray105124165] (axis cs:11.875,0) rectangle (axis cs:12.125,2.26510627871564);
\draw[draw=none,fill=lightslategray105124165] (axis cs:12.875,0) rectangle (axis cs:13.125,2.75575775589674);
\draw[draw=none,fill=lightslategray105124165] (axis cs:13.875,0) rectangle (axis cs:14.125,2.99758430212803);
\draw[draw=none,fill=lightslategray105124165] (axis cs:14.875,0) rectangle (axis cs:15.125,3.98116863977948);
\draw[draw=none,fill=lightslategray105124165] (axis cs:15.875,0) rectangle (axis cs:16.125,3.71612694116321);

\end{axis}

\begin{axis}[
    width=2.25in,
    height=1.75in,
    scale only axis,
    axis y line=right,
    tick align=outside,
    x grid style={darkgray176},
    xmin=-0.5, xmax=16.75,
    xtick=\empty,   
    xticklabels={},
    y grid style={darkgray176},
    ylabel=\textcolor{salmon230110110}{\shortstack{Communication Demand}},
    ymin=0, ymax=1.045,
    ytick={0,0.25,0.5,0.75,1},
    ytick pos=right,
    ytick style={color=salmon230110110, thick},
    yticklabel style={anchor=west, color=salmon230110110},
    axis line style={-}
]
\draw[draw=none,fill=salmon230110110] (axis cs:0.125,0) rectangle (axis cs:0.375,1);
\draw[draw=none,fill=salmon230110110] (axis cs:1.125,0) rectangle (axis cs:1.375,0.887301753683741);
\draw[draw=none,fill=salmon230110110] (axis cs:2.125,0) rectangle (axis cs:2.375,0.754057514265786);
\draw[draw=none,fill=salmon230110110] (axis cs:3.125,0) rectangle (axis cs:3.375,0.632702896555837);
\draw[draw=none,fill=salmon230110110] (axis cs:4.125,0) rectangle (axis cs:4.375,0.554996341544117);
\draw[draw=none,fill=salmon230110110] (axis cs:5.125,0) rectangle (axis cs:5.375,0.513128488032729);
\draw[draw=none,fill=salmon230110110] (axis cs:6.125,0) rectangle (axis cs:6.375,0.477633359437728);
\draw[draw=none,fill=salmon230110110] (axis cs:7.125,0) rectangle (axis cs:7.375,0.446020667544924);
\draw[draw=none,fill=salmon230110110] (axis cs:8.125,0) rectangle (axis cs:8.375,0.454330411990144);
\draw[draw=none,fill=salmon230110110] (axis cs:9.125,0) rectangle (axis cs:9.375,0.418523776035649);
\draw[draw=none,fill=salmon230110110] (axis cs:10.125,0) rectangle (axis cs:10.375,0.430496203097702);
\draw[draw=none,fill=salmon230110110] (axis cs:11.125,0) rectangle (axis cs:11.375,0.42438247158107);
\draw[draw=none,fill=salmon230110110] (axis cs:12.125,0) rectangle (axis cs:12.375,0.428533136526705);
\draw[draw=none,fill=salmon230110110] (axis cs:13.125,0) rectangle (axis cs:13.375,0.419932463093032);
\draw[draw=none,fill=salmon230110110] (axis cs:14.125,0) rectangle (axis cs:14.375,0.437055602919395);
\draw[draw=none,fill=salmon230110110] (axis cs:15.125,0) rectangle (axis cs:15.375,0.454346291585991);
\draw[draw=none,fill=salmon230110110] (axis cs:16.125,0) rectangle (axis cs:16.375,0.428110888222981);
\end{axis}
\end{tikzpicture}
    \caption{Closed-loop cost and communication demand for varying event-triggering thresholds $\epsilon$.}
    \label{fig:results_eps}
\end{figure}

First, the pendulum is controlled without additional noise by the channel emulator to analyze performance under varying event-triggering thresholds $\epsilon$.
A threshold of $\epsilon=0$ corresponds to periodic control and, therefore, a communication demand of \SI{100}{\percent}. 
With increasing $\epsilon$, the recalculation of the closed-loop solution becomes less frequent, and the open-loop control input is applied for a longer duration, resulting in a higher closed-loop cost, as illustrated in Figure~\ref{fig:results_eps}. 
On the other hand, this reduces the frequency of control signal transmissions over the wireless interface, thereby lowering the communication demand. Depending on the application requirements, we observe the best trade-off between control performance and communication demand. 
For the Furuta pendulum, we consider $\epsilon=0.4$ as a suitable threshold. In this case, the communication demand is reduced to \SI{55}{\percent} network traffic with a negligible impact on the control performance.\\
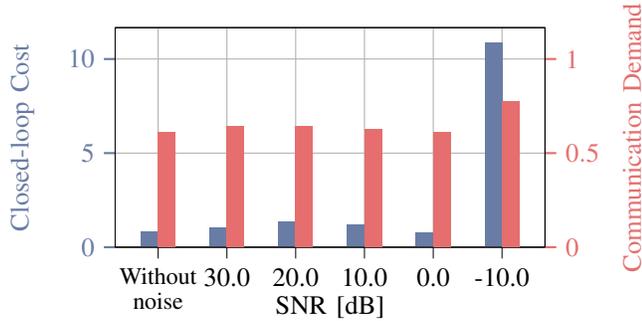
\begin{figure}
    \centering
\begin{tikzpicture}

\definecolor{darkgray176}{RGB}{176,176,176}
\definecolor{lightslategray105124165}{RGB}{105,124,165}
\definecolor{salmon230110110}{RGB}{230,110,110}

\begin{axis}[
    width=2.25in,
    height=1.75in,
    scale only axis,
    tick align=outside,
    tick pos=left,
    x grid style={darkgray176},
    xlabel={\shortstack{\\\\\\\\SNR [dB]}},
    xmin=-0.5, xmax=10,
    xtick style={color=black},
    xtick={0.25,1.25,3.25,5.25,7.25,9.25},
    xticklabels={\rotatebox{-45}{\shortstack{\small Without\\ \small noise}},\rotatebox{-45}{30.0}, \rotatebox{-45}{20.0}, \rotatebox{-45}{10.0}, \rotatebox{-45}{0.0}, \rotatebox{-45}{-10.0}},
    y grid style={darkgray176},
    ylabel=\textcolor{lightslategray105124165}{Closed-loop Cost},
    ymin=0, ymax=8,
    ytick={0,2,4,6,8},
    ytick style={color=lightslategray105124165, thick},
    yticklabel style={color=lightslategray105124165},
    axis line style={-},
    grid=both
]
\draw[draw=none,fill=lightslategray105124165] (axis cs:0,0) rectangle (axis cs:0.25,0.711430367227703); 
\draw[draw=none,fill=lightslategray105124165] (axis cs:1,0) rectangle (axis cs:1.25,1.37984938794285);
\draw[draw=none,fill=lightslategray105124165] (axis cs:2,0) rectangle (axis cs:2.25,1.04746851368718);
\draw[draw=none,fill=lightslategray105124165] (axis cs:3,0) rectangle (axis cs:3.25,1.38820510311843);
\draw[draw=none,fill=lightslategray105124165] (axis cs:4,0) rectangle (axis cs:4.25,1.17927822905778);
\draw[draw=none,fill=lightslategray105124165] (axis cs:5,0) rectangle (axis cs:5.25,1.43833141974055);
\draw[draw=none,fill=lightslategray105124165] (axis cs:6,0) rectangle (axis cs:6.25,1.13750138533708);
\draw[draw=none,fill=lightslategray105124165] (axis cs:7,0) rectangle (axis cs:7.25,1.0832939536699);
\draw[draw=none,fill=lightslategray105124165] (axis cs:8,0) rectangle (axis cs:8.25,1.45208422675124);
\draw[draw=none,fill=lightslategray105124165] (axis cs:9,0) rectangle (axis cs:9.25,7.19196867968429); 
\end{axis}

\begin{axis}[
    width=2.25in,
    height=1.75in,
    scale only axis,
    axis y line=right,
    tick align=outside,
    x grid style={darkgray176},
    xmin=-0.5, xmax=10,
    xtick=\empty,   
    xticklabels={},
    y grid style={darkgray176},
    ylabel=\textcolor{salmon230110110}{\shortstack{Communication Demand}},
    ymin=0, ymax=1,
    ytick={0,0.25,0.5,0.75,1},
    ytick pos=right,
    ytick style={color=salmon230110110, thick},
    yticklabel style={anchor=west, color=salmon230110110},
    axis line style={-}
]
\draw[draw=none,fill=salmon230110110] (axis cs:0.25,0) rectangle (axis cs:0.5,0.554996341544117); 
\draw[draw=none,fill=salmon230110110] (axis cs:1.25,0) rectangle (axis cs:1.5,0.616731476345795);
\draw[draw=none,fill=salmon230110110] (axis cs:2.25,0) rectangle (axis cs:2.5,0.62901057928716);
\draw[draw=none,fill=salmon230110110] (axis cs:3.25,0) rectangle (axis cs:3.5,0.634830873676968);
\draw[draw=none,fill=salmon230110110] (axis cs:4.25,0) rectangle (axis cs:4.5,0.618741432623309);
\draw[draw=none,fill=salmon230110110] (axis cs:5.25,0) rectangle (axis cs:5.5,0.6007514766841);
\draw[draw=none,fill=salmon230110110] (axis cs:6.25,0) rectangle (axis cs:6.5,0.587931757218206);
\draw[draw=none,fill=salmon230110110] (axis cs:7.25,0) rectangle (axis cs:7.5,0.601018621425357);
\draw[draw=none,fill=salmon230110110] (axis cs:8.25,0) rectangle (axis cs:8.5,0.615755600936555);
\draw[draw=none,fill=salmon230110110] (axis cs:9.25,0) rectangle (axis cs:9.5,0.70983704031127);
\end{axis}

\end{tikzpicture}
    \caption{Closed-loop cost and communication demand over varying channel signal-to-noise ratio.}
    \label{fig:results_snr}
\end{figure}

Next, we keep $\epsilon=0.4$ constant while employing the radio channel emulator to provide a challenging communication channel and force packet errors via a decreasing SNR towards far cell-edge. 
The results are shown in Figure~\ref{fig:results_snr}. 
The closed-loop cost and communication demand are comparable for SNR values above $\SI{0}{\decibel}$. 
However, at \SI{-10}{\decibel}, increased channel interference leads to significant packet errors.
If the receiver of a message cannot decode the packet, it sends a negative-acknowledgment to the transmitter of the message, which performs an HARQ retransmission. 
From a control perspective, if such an HARQ retransmission occurs in the uplink direction, i.e., from the plant to the controller, the uplink delay increases and will be measured and accounted for by the delay compensation.
Meanwhile, if the packet error occurs in the downlink direction from the controller to the actuator, the increased downlink delay is not being correctly compensated by the delay compensation, since the downlink delay is assumed constant.

This is reflected in the increased closed-loop cost. To understand the factors influencing the controller performance, we analyze the delays at an SNR of \SI{-10}{\decibel}. Figure~\ref{fig:result_trajectories} presents the results for SNR=\SI{-10}{\decibel} compared to the case without added noise, illustrating the RTT, input trajectory $u(\cdot)$, and the state $\alpha(\cdot)$, which represents the angle of the pendulum arm. The input constraints $\bar{u}$ and $\underline{u}$ are shown as dashed lines.

Examining $\alpha$ with a setpoint of 0, the results indicate that, after the swing-up phase, the system remains stable when no additional noise is present on the channel. For SNR=\SI{-10}{\decibel}, the increased delays prolong the swing-up phase. Despite this, the system maintains stability even with latency peaks exceeding \SI{100}{\milli\second}. However, when multiple correlated latency peaks occur at around $t$=\SI{8}{\second}, the controller fails to maintain stability, causing the pendulum arm to fall and leading to an increased closed-loop cost.\\
\begin{figure}
    \begin{subfigure}{\columnwidth}
        \centering
        \input{figures/input_traj_snr-10.tex}
        \label{fig:alph_traj}
    \end{subfigure}
    \begin{subfigure}{\columnwidth}
        \centering
        \input{figures/alpha_traj_snr-10}
        \label{fig:rtt_traj}
    \end{subfigure}
    \begin{subfigure}{\columnwidth}
        \centering
        \input{figures/rtt_traj_snr-10}
        \label{fig:input_traj}
    \end{subfigure}
    \begin{tikzpicture}[overlay, remember picture]
		\node[] (textnode) at (5.75,7.25) {\parbox{2.5cm}{\centering Instability due to\\higher latency}};
		\draw[-{Latex[length=2mm]},thick] (textnode.west) -- (3.9,7.25);
		\draw[-{Latex[length=2mm]}, thick] (textnode.south) -- (3.75,4);
        \node[] (textnode2) at (3.35,5.75) {\parbox{1cm}{\centering Swing\\up}};
		\draw[-{Latex[length=2mm]},thick] (textnode2.west) -- (2.25,6);
	\end{tikzpicture}
    \caption{Closed loop trajectories of the pendulum arm $\alpha(\cdot)$, the input $u(\cdot)$, and the RTT for SNR$=\SI{-10}{\decibel}$ compared to the case without additional noise.}
    \label{fig:result_trajectories}
\end{figure}

\begin{figure}[!t]
	\centering
\begin{tikzpicture}

\definecolor{crimson2143940}{RGB}{214,39,40}
\definecolor{darkgray176}{RGB}{176,176,176}
\definecolor{darkorange25512714}{RGB}{255,127,14}
\definecolor{darkturquoise23190207}{RGB}{23,190,207}
\definecolor{forestgreen4416044}{RGB}{44,160,44}
\definecolor{goldenrod18818934}{RGB}{188,189,34}
\definecolor{gray127}{RGB}{127,127,127}
\definecolor{lightgray204}{RGB}{204,204,204}
\definecolor{mediumpurple148103189}{RGB}{148,103,189}
\definecolor{orchid227119194}{RGB}{227,119,194}
\definecolor{sienna1408675}{RGB}{140,86,75}
\definecolor{steelblue31119180}{RGB}{31,119,180}

\begin{groupplot}[
    width=1.05*\columnwidth/2,
	height=1.05*\columnwidth/2,
	group style={group size=2 by 2,horizontal sep=1cm, vertical sep=1.5cm}]
\nextgroupplot[
legend cell align={left},
legend columns=2,
legend style={
  fill opacity=0.8,
  draw opacity=1,
  text opacity=1,
  at={(1.2,1.15)},
  anchor=south,
  draw=lightgray204
},
tick align=outside,
tick pos=left,
x grid style={darkgray176},
xlabel={Uplink Delay [ms]},
xmajorgrids=false,
xmin=3, xmax=150,
xtick style={color=black},
y grid style={darkgray176},
ylabel={CDF},
ylabel near ticks,
ymajorgrids,
ymin=-0.05, ymax=1.05,
ytick style={color=black},
xmode=log,
xtick={4,5,6,7,8,9,10,20,30,40,50,60,70,80,90,100},
xticklabels={4,,,,,,10,,,,,,,,100},
extra x ticks={4,10,100},
extra x tick style={grid=major},
extra x tick labels={}
]
\addplot [semithick, steelblue31119180, const plot mark left]
table {figures/networked_snr_cdf_delays/fig-000_reduced.dat};
\addlegendentry{SNR=-10.0}
\addplot [semithick, darkorange25512714, const plot mark left]
table {figures/networked_snr_cdf_delays/fig-001_reduced.dat};
\addlegendentry{SNR=-5.0}
\addplot [semithick, forestgreen4416044, const plot mark left]
table {figures/networked_snr_cdf_delays/fig-002_reduced.dat};
\addlegendentry{SNR=0.0}
\addplot [semithick, crimson2143940, const plot mark left]
table {figures/networked_snr_cdf_delays/fig-003_reduced.dat};
\addlegendentry{SNR=5.0}
\addplot [semithick, mediumpurple148103189, const plot mark left]
table {figures/networked_snr_cdf_delays/fig-004_reduced.dat};
\addlegendentry{SNR=10.0}
\addplot [semithick, sienna1408675, const plot mark left]
table {figures/networked_snr_cdf_delays/fig-005_reduced.dat};
\addlegendentry{SNR=15.0}
\addplot [semithick, orchid227119194, const plot mark left]
table {figures/networked_snr_cdf_delays/fig-006_reduced.dat};
\addlegendentry{SNR=20.0}
\addplot [semithick, gray127, const plot mark left]
table {figures/networked_snr_cdf_delays/fig-007_reduced.dat};
\addlegendentry{SNR=25.0}
\addplot [semithick, goldenrod18818934, const plot mark left]
table {figures/networked_snr_cdf_delays/fig-008_reduced.dat};
\addlegendentry{SNR=30.0}
\addplot [semithick, darkturquoise23190207, const plot mark left]
table {figures/networked_snr_cdf_delays/fig-009_reduced.dat};
\addlegendentry{Without noise}

\nextgroupplot[
scaled y ticks=manual:{}{\pgfmathparse{#1}},
tick align=outside,
tick pos=left,
x grid style={darkgray176},
xlabel={Calculation Time [ms]},
xmajorgrids=false,
xmin=1.5, xmax=10.5,
xtick style={color=black},
xtick={2,3,4,5,6,7,8,9,10},
xticklabels={2,,,,,,,10},
extra x ticks={2,10},
extra x tick style={grid=major},
extra x tick labels={},
y grid style={darkgray176},
ylabel={},
ymajorgrids,
ymin=-0.05, ymax=1.05,
ytick style={color=black},
yticklabels={}, 
xmode=log
]
\addplot [semithick, steelblue31119180, const plot mark left]
table {figures/networked_snr_cdf_delays/fig-010_reduced.dat};
\addplot [semithick, darkorange25512714, const plot mark left]
table {figures/networked_snr_cdf_delays/fig-011_reduced.dat};
\addplot [semithick, forestgreen4416044, const plot mark left]
table {figures/networked_snr_cdf_delays/fig-012_reduced.dat};
\addplot [semithick, crimson2143940, const plot mark left]
table {figures/networked_snr_cdf_delays/fig-013_reduced.dat};
\addplot [semithick, mediumpurple148103189, const plot mark left]
table {figures/networked_snr_cdf_delays/fig-014_reduced.dat};
\addplot [semithick, sienna1408675, const plot mark left]
table {figures/networked_snr_cdf_delays/fig-015_reduced.dat};
\addplot [semithick, orchid227119194, const plot mark left]
table {figures/networked_snr_cdf_delays/fig-016_reduced.dat};
\addplot [semithick, gray127, const plot mark left]
table {figures/networked_snr_cdf_delays/fig-017_reduced.dat};
\addplot [semithick, goldenrod18818934, const plot mark left]
table {figures/networked_snr_cdf_delays/fig-018_reduced.dat};
\addplot [semithick, darkturquoise23190207, const plot mark left]
table {figures/networked_snr_cdf_delays/fig-019_reduced.dat};

\nextgroupplot[
tick align=outside,
tick pos=left,
x grid style={darkgray176},
xlabel={Downlink Delay [ms]},
xmin=4, xmax=25,
xtick style={color=black},
xtick={5,6,7,8,9,10,20},
xticklabels={5,,,,,10,20},
extra x ticks={5,10,20},
extra x tick style={grid=major},
extra x tick labels={},
y grid style={darkgray176},
ylabel={CDF},
ylabel near ticks,
ymajorgrids,
ymin=-0.05, ymax=1.05,
ytick style={color=black},
xmode=log
]
\addplot [semithick, steelblue31119180, const plot mark left]
table {figures/networked_snr_cdf_delays/fig-020_reduced.dat};
\addplot [semithick, darkorange25512714, const plot mark left]
table {figures/networked_snr_cdf_delays/fig-021_reduced.dat};
\addplot [semithick, forestgreen4416044, const plot mark left]
table {figures/networked_snr_cdf_delays/fig-022_reduced.dat};
\addplot [semithick, crimson2143940, const plot mark left]
table {figures/networked_snr_cdf_delays/fig-023_reduced.dat};
\addplot [semithick, mediumpurple148103189, const plot mark left]
table {figures/networked_snr_cdf_delays/fig-024_reduced.dat};
\addplot [semithick, sienna1408675, const plot mark left]
table {figures/networked_snr_cdf_delays/fig-025_reduced.dat};
\addplot [semithick, orchid227119194, const plot mark left]
table {figures/networked_snr_cdf_delays/fig-026_reduced.dat};
\addplot [semithick, gray127, const plot mark left]
table {figures/networked_snr_cdf_delays/fig-027_reduced.dat};
\addplot [semithick, goldenrod18818934, const plot mark left]
table {figures/networked_snr_cdf_delays/fig-028_reduced.dat};
\addplot [semithick, darkturquoise23190207, const plot mark left]
table {figures/networked_snr_cdf_delays/fig-029_reduced.dat};

\nextgroupplot[
scaled y ticks=manual:{}{\pgfmathparse{#1}},
tick align=outside,
tick pos=left,
x grid style={darkgray176},
xlabel={RTT Delay [ms]}, 
xmin=9, xmax=150,
xtick style={color=black},
extra x ticks={10,100},
extra x tick style={grid=major},
extra x tick labels={},
y grid style={darkgray176},
ylabel={},
ymajorgrids,
ymin=-0.05, ymax=1.05,
ytick style={color=black},
yticklabels={}, 
xmode=log,
log ticks with fixed point,
]
\addplot [semithick, steelblue31119180, const plot mark left]
table {figures/networked_snr_cdf_delays/fig-030_reduced.dat};
\addplot [semithick, darkorange25512714, const plot mark left]
table {figures/networked_snr_cdf_delays/fig-031_reduced.dat};
\addplot [semithick, forestgreen4416044, const plot mark left]
table {figures/networked_snr_cdf_delays/fig-032_reduced.dat};
\addplot [semithick, crimson2143940, const plot mark left]
table {figures/networked_snr_cdf_delays/fig-033_reduced.dat};
\addplot [semithick, mediumpurple148103189, const plot mark left]
table {figures/networked_snr_cdf_delays/fig-034_reduced.dat};
\addplot [semithick, sienna1408675, const plot mark left]
table {figures/networked_snr_cdf_delays/fig-035_reduced.dat};
\addplot [semithick, orchid227119194, const plot mark left]
table {figures/networked_snr_cdf_delays/fig-036_reduced.dat};
\addplot [semithick, gray127, const plot mark left]
table {figures/networked_snr_cdf_delays/fig-037_reduced.dat};
\addplot [semithick, goldenrod18818934, const plot mark left]
table {figures/networked_snr_cdf_delays/fig-038_reduced.dat};
\addplot [semithick, darkturquoise23190207, const plot mark left]
table {figures/networked_snr_cdf_delays/fig-039_reduced.dat};
\end{groupplot}

\end{tikzpicture}
	\caption{Delay components for different SNR values.}
	\label{fig:delays}
\end{figure}

\begin{table}
	\centering
	\caption{Average of the delays for different SNR values.}
    \label{tab:average_delays_snr}
    \begin{tabular}{lcccc}
\toprule
 SNR [dB] & \makecell[c]{Uplink\\Delay [ms]} & \makecell[c]{Calculation\\Time [ms]} & \makecell[c]{Downlink\\Delay [ms]} & \makecell[c]{RTT\\Delay [ms]} \\
\midrule
\makecell[l]{Without noise} & 11.06 & 4.61 & 4.80 & 20.47 \\
30.0 & 14.65 & 4.52 & 5.02 & 24.18 \\
25.0 & 14.86 & 4.83 & 5.13 & 24.81 \\
20.0 & 14.21 & 4.59 & 4.97 & 23.75 \\
15.0 & 13.57 & 4.59 & 4.99 & 23.15 \\
10.0 & 12.82 & 4.54 & 5.04 & 22.39 \\
5.0 & 11.70 & 4.57 & 5.06 & 21.33 \\
0.0 & 11.45 & 4.62 & 5.08 & 21.15 \\
-5.0 & 11.86 & 4.69 & 5.24 & 21.79 \\
-10.0 & 17.22 & 4.40 & 5.57 & 27.09 \\
\bottomrule
\end{tabular}

\end{table}

To analyze the delay distribution in more detail, we show the cumulative distribution function (CDF) of the delays for varying SNR values in Figure~\ref{fig:delays} and the corresponding average delays in Table~\ref{tab:average_delays_snr}.
For the settings without added noise and good SNR values of 30 to \SI{10}{\decibel}, the delays are centered closely around their average values since no to little significant transmission errors occur.
In the uplink direction under low signal level, resulting in an SNR of \SI{-10}{\decibel}, a large part of the available resources are allocated to the critical pendulum UE. Here, uplink resource over-allocations happen and the UE can send data without sending scheduling requests, resulting in lower-latency outliers.
These outliers below the average delay are at around \SI{6}{\milli\second}.
However, if multiple re-transmissions are needed to transmit a packet without errors, uplink delays of up to \SI{225}{\milli\second} are observed.
The problems with instability for the case of \SI{-10}{\decibel} also lead to an early termination of the OCP calculation and thus reduced calculation times.
In the downlink direction, no scheduling requests have to be made. Hence, the downlink delay can only vary within one slot length.

\section{CONCLUSIONS \& OUTLOOK}
\label{sec:conclusion}
This paper illustrates the potential of co-design of networked control via future 6G networks by providing first experimental results of event-triggered NMPC within a 6G research platform.
Even for the control benchmark of swinging up and stabilizing an inverted pendulum at its upper equilibrium, event-triggered control reduces communication demand to \SI{61}{\percent} compared to periodic control while maintaining similar control performance. 
While the networked control system is robust to delays up to \SI{100}{\milli\second}, prolonged delays due to poor channel conditions can affect the stability of the control application.
Future work will investigate the impact of event-triggered control on energy consumption and battery life in networked control systems. This can be explored using 5G RedCap (Reduced Capability) devices, which enable efficient communication with lower data rate requirements, thereby extending battery life \cite{jorke2024redcap}. These devices are able to traverse into a power-saving state for a long time during a period of inactivity. Specifically, a device keeps its context on cell side and can quickly wake up for data transmission to transmit data without establishing a new connection. An event-triggered control strategy could leverage this feature by keeping the device in an inactive state until a new event occurs. Ultimately, the co-design of the controller and the network should prioritize optimizing resource utilization, thereby minimizing energy consumption. Finally, further experiments with diverse control applications and communication systems are needed to advance networked control applications.

\printbibliography

\end{document}